\newcommand\dispatch[4]{Dispatch({#1}, {#2}, {#3}, {#4})} 
\newcommand\muand[5]{And({#1}, {#2}, {#3}, {#4}, {#5})}
\newcommand\mubackbox[4]{BackBox({#1}, {#2}, {#3}, {#4})}
\newcommand\mubackdiamond[4]{BackDiamond({#1}, {#2}, {#3}, {#4})}
\newcommand\powerset{\mathcal{P}}
\newcommand\setcomp[2]{\left\{\ {#1}\ \left|\ {#2}\ \right.\right\}} 
\newcommand\set[1]{\left\{{#1}\right\}}
\newcommand\prepop[1]{\overline{Pop}({#1})}
\newcommand\prerew[2]{\overline{Rew}({#1}, {#2})}
\newcommand\prepush[3]{\overline{Push}({#1}, {#2}, {#3})}
\newcommand\pre[3]{Pre({#1}, {#2}, {#3})}
\newcommand{\pds}{\mathbb{P}}
\newcommand{\controlstates}{\mathcal{P}}
\newcommand{\command}[4]{#1\,#2 \rightarrow #4\,#3}
\newcommand{\alphabet}{\Sigma}
\newcommand{\eword}{\varepsilon}
\newcommand{\config}[2]{{\langle #1, #2 \rangle}}
\newcommand{\pdstran}{\hookrightarrow}
\newcommand{\configs}{\mathcal{C}}
\newcommand{\sbot}{\perp}
\newcommand{\ma}{A}
\newcommand{\mastates}{\mathcal{Q}}
\newcommand{\madelta}{\Delta}
\newcommand{\mafinals}{\mathcal{F}}
\newcommand{\mainit}{I}
\newcommand{\matrans}[3]{#1 \xrightarrow{#2} #3}
\newcommand{\matransA}[4]{#1 \xrightarrow[#4]{#2} #3}
\newcommand{\mamonleq}{\preceq}
\newcommand{\runsup}{\ll}
\newcommand{\mucalc}{modal $\mu$-calculus\xspace}
\newcommand{\muenv}{V}
\newcommand{\mupropenv}{\rho}
\newcommand{\setmuvar}{\mathcal{Z}}
\newcommand\mubrack[1]{\llbracket {#1} \rrbracket} 
\newcommand{\backbox}{\overline{\Box}}
\newcommand{\backdiamond}{\overline{\Diamond}}
\newcommand{\genericnode}[6][]{\ifthenelse{\equal{#1}{}}
                                          {#2{#3}{\makebox(#4,#5)[c]{#6}}}
                                          {#2[#1]{#3}{\makebox(#4,#5)[c]{#6}}}}
\newcommand\proofcase[2]{\medskip\noindent\textbf{Case {#1}}: 

\nopagebreak

\medskip\noindent{#2}}
\newcommand\order[1]{\mathcal{O}\left({#1}\right)}
\begin{document}

\pagestyle{plain}

\title{A Saturation Method for the Modal Mu-Calculus with Backwards Modalities
      over Pushdown Systems}

\author{{M. Hague} and {C.-H. L. Ong}} 

\institute{Oxford University Computing Laboratory \\
\texttt{Matthew.Hague@comlab.ox.ac.uk} $\quad$ \texttt{Luke.Ong@comlab.ox.ac.uk}
\vspace{-2.5ex} }

\date{\today}

\maketitle

\begin{abstract}
We present an extension of an algorithm for computing \emph{directly} the
denotation of a \mucalc formula $\chi$ over the configuration graph of a
pushdown system to allow backwards modalities.  Our method gives the first
extension of the saturation technique to the full \mucalc with backwards
modalities.  
\end{abstract}

\section{Introduction}

Recently we introduced a saturation method for directly computing the denotation
of a \mucalc formula over the configuration graph of a pushdown
system~\cite{HO10b}.  Here we show how this algorithm can be extended to allow
backwards modalities.  This article is intended as a companion to our previous
work, and as such, does not repeat many of the details.

\section{Preliminaries}

Since we extend our definition of \mucalc, we give the full details here.  The
reader is directed to our previous work for the remaining
preliminaries~\cite{HO10b}.

Given a set of propositions $AP$ and a disjoint set of variables $\setmuvar$,
formulas of the \mucalc are defined as follows (with $x \in AP$ and $Z \in
\setmuvar$):
\[
    \varphi\ :=\ x\ |\ \neg x\ |\ Z\ |\ \varphi\ \land\ \varphi\ |\
    \varphi\ \lor\
    \varphi\ |\ \Box\varphi\ |\ \Diamond\varphi\ |\ \mu Z. \varphi\ |\ \nu
    Z.\varphi \ .
\]
Thus we assume that the formulas are in \emph{positive form}, in the sense that
negation is only applied to atomic propositions. Over a pushdown
system, the semantics of a formula $\varphi$ are given with respect to a
\emph{valuation} $\muenv : \setmuvar \rightarrow \powerset(\configs)$ which maps
each free variable to its set of satisfying configurations and an environment
$\mupropenv : AP \rightarrow \powerset(\configs)$ mapping each atomic
proposition to its set of satisfying configurations.  We then have, 
\[
    \begin{array}{rcl}
        \mubrack{x}^\pds_\muenv &=& \mupropenv(x) \\

        \mubrack{\neg x}^\pds_\muenv &=& \configs \setminus \mupropenv(x) \\

        \mubrack{Z}^\pds_\muenv &=& \muenv(Z) \\

        \mubrack{\varphi_1 \land \varphi_2}^\pds_\muenv &=&
        \mubrack{\varphi_1}^\pds_\muenv \cap \mubrack{\varphi_2}^\pds_\muenv \\

        \mubrack{\varphi_1 \lor \varphi_2}^\pds_\muenv &=&
        \mubrack{\varphi_1}^\pds_\muenv \cup \mubrack{\varphi_2}^\pds_\muenv \\

        \mubrack{\Box \varphi}^\pds_\muenv &=& \setcomp{c \in \configs}{\forall
        c'.c \pdstran c' \Rightarrow c' \in \mubrack{\varphi}^\pds_\muenv} \\

        \mubrack{\Diamond\varphi}^\pds_\muenv &=& \setcomp{c \in \configs}{\exists c'.c
        \pdstran c' \land c' \in \mubrack{\varphi}^\pds_\muenv} \\

        \mubrack{\backbox \varphi}^\pds_\muenv &=& \setcomp{c \in
        \configs}{\forall c'.c' \pdstran c \Rightarrow c' \in
        \mubrack{\varphi}^\pds_\muenv} \\

        \mubrack{\backdiamond\varphi}^\pds_\muenv &=& \setcomp{c \in
        \configs}{\exists c'.c' \pdstran c \land c' \in
        \mubrack{\varphi}^\pds_\muenv} \\

        \mubrack{\mu Z.\varphi}^\pds_\muenv &=& \bigcap\setcomp{S \subseteq
        \configs}{\mubrack{\varphi}^\pds_{\muenv[Z \mapsto S]} \subseteq S} \\

        \mubrack{\nu Z.\varphi}^\pds_\muenv & = & \bigcup\setcomp{S \subseteq
        \configs}{S \subseteq \mubrack{\varphi}^\pds_{\muenv[Z \mapsto S]}} \\
    \end{array}
\]
where $\muenv[Z \mapsto S]$ updates the valuation $\muenv$ to map the variable
$Z$ to the set $S$.

The operators $\Box\varphi$ and $\Diamond\varphi$ assert
that $\varphi$ holds after all possible transitions and after some transition
respectively; $\backbox$ and $\backdiamond$ are their backwards time
counterparts; and the $\mu$ and $\nu$ operators specify greatest and least fixed
points.  Another interpretation of these operators is given below.  For a full
discussion of the \mucalc we refer the reader to a survey by Bradfield and
Stirling~\cite{BS02}.

\section{The Algorithm}

\label{thealg}

\label{algassume}

Without loss of generality, assume all pushdown commands are
$\command{p}{a}{\eword}{p'}$, $\command{p}{a}{b}{p'}$, or
$\command{p}{a}{bb'}{p'}$.

The extensions to our earlier work~\cite{HO10b} are given in
Procedures~\ref{backboxalg} and~\ref{backdiaalg}.  We refer the reader to the
original article for a description of the notations used.  

For a control state $p$ and characters $a, b$, let $\prepop{p} = \setcomp{(p',
a')}{\command{p'}{a'}{\eword}{p}}$, and $\prerew{p}{a} =
\setcomp{(p',a')}{\command{p'}{a'}{b}{p}}$, $\prepush{p}{a}{b} = \setcomp{(p',
a')}{\command{p'}{a'}{ab}{p}}$, and together $\pre{p}{a}{b} = \prepop{p} \cup
\prerew{p}{a} \cup \prepush{p}{a}{b}$.  

\begin{algorithm}
    \caption{\label{backboxalg}$\mubackbox{\ma}{\varphi_1}{c}{\pds}$}

    \begin{algorithmic}
        \STATE {$((\mastates_1, \alphabet, \madelta_1, \_, \mafinals_1),
               \mainit_1) = \dispatch{\ma}{\varphi_1}{c}{\pds}$}

        \STATE {$\ma' = (\mastates_1 \cup \mainit \cup \mastates_{int},
               \alphabet, \madelta_1 \cup \madelta', \_, \mafinals_1)$}

        \STATE {where $\mainit = \setcomp{(p, \backbox \varphi_1, c)}{p \in
               \controlstates}$}

        \STATE {and $\mastates_{int} = \setcomp{(p, \backbox \varphi_1, c, a)}{p
               \in \controlstates \land a \in \alphabet}$}

        \STATE {and $\madelta'=$}

        \STATE {$\begin{array}{l} \setcomp{((p, \backbox \varphi_1, c), a, Q)}{
                           \begin{array}{c}
                               Q = \set{(p, \backbox \varphi_1, c, a)} \cup
                               Q_{pop} \cup Q_{rew}\ \land \\

                               \prepop{p} = \set{(p_1, a_1),\ldots,(p_{n},
                               a_{n})}\ \land \\

                               \bigwedge\limits_{1 \leq j \leq n} \left(
                               \matransA{\mainit_1(p_j)}{a_j}{Q'_j}{\madelta_1}\matransA{}{a}{Q^{pop}_j}{\madelta_1}
                               \right)\ \land \\

                               Q_{pop} = Q^{pop}_1 \cup \cdots \cup Q^{pop}_n\
                               \land \\

                               \prerew{p}{a} = \set{(p'_1,
                               a'_1),\ldots,(p'_{n'}, a'_{n'})}\ \land \\

                               \bigwedge\limits_{1 \leq j \leq n'} \left(
                               \matransA{\mainit_1(p'_j)}{a'_j}{Q^{rew}_j}{\madelta_1}
                               \right)\ \land \\

                               Q_{rew} = Q^{rew}_1 \cup \cdots \cup Q^{rew}_n

                           \end{array}
                       } \cup \\

                       \setcomp{\left((p, \backbox \varphi_1, c, a), b,
                       Q\right)}{   
                           \begin{array}{c}
                               \pre{p}{a}{b} = \set{(p_1, a_1),\ldots,(p_{n},
                               a_{n})} \land \\

                               \bigwedge\limits_{1 \leq j \leq n} \left(
                               \matransA{\mainit_1(p_j)}{a_j}{Q^{push}_j}{\madelta_1}
                               \right) \land \\

                               Q = Q^{push}_1 \cup \cdots \cup Q^{push}_n
                           \end{array}
                       } \cup \\

                       \setcomp{\left((p, \backbox \varphi_1, c), a,
                       \set{q^\ast}\right)}{\forall b . \pre{p}{a}{b} =
                       \emptyset} \cup \\

                       \setcomp{\left((p, \backbox \varphi_1, c), \sbot,
                       \set{q^\eword_f}\right)}{\forall a . \pre{p}{\sbot}{a} =
                       \emptyset} \cup \\

                       \setcomp{\left((p, \backbox \varphi_1, c, a), b,
                       \set{q^\ast}\right)}{\prepush{p}{a}{b} = \emptyset} \cup
                       \\

                       \setcomp{\left((p, \backbox \varphi_1, c, a), \sbot,
                       \set{q^\eword_f}\right)}{\prepush{p}{a}{\sbot} =
                       \emptyset} \\

                       \end{array}$}

        \RETURN {$(\ma', \mainit)$}

    \end{algorithmic}

\end{algorithm}

\begin{algorithm}
    \caption{\label{backdiaalg}$\mubackdiamond{\ma}{\varphi_1}{c}{\pds}$}

    \begin{algorithmic}
        \STATE {$((\mastates_1, \alphabet, \madelta_1, \_, \mafinals_1),
               \mainit_1) = \dispatch{\ma}{\varphi_1}{c}{\pds}$}

        \STATE {$\ma' = (\mastates_1 \cup \mainit \cup \mastates_{int},
               \alphabet, \madelta_1 \cup \madelta', \_, \mafinals_1)$}

        \STATE {where $\mainit = \setcomp{(p, \backdiamond \varphi_1, c)}{p \in
               \controlstates}$}

        \STATE {and $\mastates_{int} = \setcomp{(p, \backbox \varphi_1, c, a)}{p
               \in \controlstates \land a \in \alphabet}$}

        \STATE {and $\madelta' =
                \begin{array}{c} 
                    \setcomp{((p, \backdiamond \varphi_1, c), a, Q)}{
                    \begin{array}{c}
                        (p', a') \in \prepop{p} \land \\

                        \matransA{\mainit_1(p')}{a'}{Q'}{\madelta_1}\matransA{}{a}{Q}{\madelta_1}
                    \end{array}} \cup \\

                    \setcomp{((p, \backdiamond \varphi_1, c), a, Q)}{
                    \begin{array}{c}
                        (p', a') \in \prerew{p}{a} \land \\

                        \matransA{\mainit_1(p')}{a'}{Q}{\madelta_1}
                    \end{array}} \cup \\

                    \set{((p, \backdiamond \varphi_1, c), a, \set{(p,
                    \backdiamond \varphi_1, c, a)})} \cup \\

                    \setcomp{((p, \backdiamond \varphi_1, c, a), b, Q)}{
                    \begin{array}{c}
                        (p', a') \in \prepush{p}{a}{b} \land \\

                        \matransA{\mainit_1(p')}{a'}{Q}{\madelta_1}
                    \end{array}}
                \end{array}$.}

        \RETURN {$(\ma', \mainit)$}

    \end{algorithmic}

\end{algorithm}

\section{Termination}

\label{termination}

The new procedures defined here add extra cases to the termination
proof~\cite{HO10b}.  We show these cases here and refer the reader to the
original article for an explanation of the notation and concepts.

\begin{samepage}
    \begin{lemma}[Termination]
    \label{terminates}
        The algorithm satisfies the following properties.
        \begin{enumerate}
        \item Each subroutine introduces a fixed set of new states, independent of
              the automaton $\ma$ given as input (but may depend on the other
              parameters).  Transitions are only added to these new states.

        \item For two input automata $\ma_1$ and $\ma_2$ (giving valuations of the same
              environments) such that $\ma_1 \mamonleq \ma_2$, then the returned
              automata $\ma'_1$ and $\ma'_2$, respectively, satisfy $\ma'_1 \mamonleq
              \ma'_2$.

        \item The algorithm terminates.
        \end{enumerate}
    \end{lemma}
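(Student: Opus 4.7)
The plan is to extend the inductive proof from the original article~\cite{HO10b} by adding two cases, one for $\mubackbox{\ma}{\varphi_1}{c}{\pds}$ and one for $\mubackdiamond{\ma}{\varphi_1}{c}{\pds}$. All three properties are established simultaneously by induction on the structure of the formula, so that property~(2) is available when needed (in other subroutines) to guarantee convergence of the fixed-point iterations, and hence~(3). In the two new cases no iteration occurs, so for~(3) it suffices to inspect the construction directly after the recursive call to $\dispatch$.

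For~(1), inspection of Procedure~\ref{backboxalg} exhibits the new state sets $\mainit = \setcomp{(p, \backbox \varphi_1, c)}{p \in \controlstates}$ and $\mastates_{int} = \setcomp{(p, \backbox \varphi_1, c, a)}{p \in \controlstates \land a \in \alphabet}$. These depend only on $\controlstates$, $\alphabet$, $\varphi_1$, and $c$, and not on $\ma$. A case split on the six clauses defining $\madelta'$ confirms that the source of each new transition lies in $\mainit \cup \mastates_{int}$. The argument for Procedure~\ref{backdiaalg} is analogous and simpler.

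For~(2), assume $\ma_1 \mamonleq \ma_2$. By the induction hypothesis the recursive call yields automata respecting $\mamonleq$, so in particular every transition in $\madelta_1$ built from the $\ma_1$ input has a corresponding transition in $\madelta_1$ built from the $\ma_2$ input, and similarly for the initial-state maps $\mainit_1$. We then show that each clause defining $\madelta'$ is monotone in the same sense. The clauses witnessing sequences such as $\matransA{\mainit_1(p_j)}{a_j}{Q'_j}{\madelta_1}\matransA{}{a}{Q^{pop}_j}{\madelta_1}$ lift componentwise from $\ma_1$-witnesses to $\ma_2$-witnesses through the conjunction; the degenerate clauses producing $q^\ast$ or $q^\eword_f$ transitions are guarded only by $\pre{p}{a}{b}$ being empty, which depends on $\pds$ and not on $\ma$, so the transitions they generate are identical for both inputs.

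For~(3), termination in the new cases is immediate: after the recursive call to $\dispatch$ (terminating by induction), the set $\madelta'$ is defined by finitely many set comprehensions ranging over the finite sets $\controlstates$, $\alphabet$, $\mastates_1$, and $\madelta_1$, and no iteration is performed. The main obstacle I anticipate is the careful verification of~(2) in the BackBox case, where a single transition from $(p, \backbox \varphi_1, c)$ simultaneously witnesses conjunctions of runs indexed by $(p_j, a_j) \in \prepop{p}$ and by $(p'_j, a'_j) \in \prerew{p}{a}$; one must check that the independently chosen $\ma_1$-witnesses can be replaced coherently by $\ma_2$-witnesses. This will follow from the pointwise nature of $\mamonleq$ inherited from~\cite{HO10b}, but it is the step most deserving of explicit argument.
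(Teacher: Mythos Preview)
Your proposal is correct and follows essentially the same approach as the paper. The paper's own proof is extremely terse: it declares (1) and (3) trivial, and for (2) observes in one sentence that every new transition is either independent of $\ma$ or derived from transitions $\matransA{\mainit_1(p')}{a}{Q}{\madelta_1}$ in the recursive result, so that $\ma_1 \mamonleq \ma_2$ (lifted through the recursive call by induction) yields matching witnesses with $Q' \runsup Q$; your plan spells this out clause by clause and correctly flags the conjunction-lifting in the BackBox case that the paper absorbs into ``similar to the previous cases.''
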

\end{samepage}
\begin{proof}
    The first of these conditions is trivially satisfied by all constructions,
    hence we omit the proofs.  Similarly, termination is trivial.  The second
    and third conditions will be shown by mutual induction over the recursion
    (structure of the formula).  The new cases follow.

    \proofcase{$\mubackbox{\ma}{\varphi_1}{c}{\pds}$ and
    $\mubackdiamond{\ma}{\varphi_1}{c}{\pds}$}{%
        It can be observed that all new transitions in $\ma$ are derived from
        transitions $\matransA{\mainit(p')}{a}{Q}{\ma}$ (or are independent of
        $\ma$ and $\ma'$).  Since $\ma \mamonleq \ma'$ it follows that all
        transitions have a counterpart $\matransA{\mainit(p')}{a}{Q'}{\ma'}$
        with $Q' \runsup Q$.  Hence the property follows in a similar manner to
        the previous cases. } 
\end{proof}

\subsection{Complexity}

The new procedures change the complexity of the algorithm slightly, although the
algorithm remains in EXPTIME.  In particular, the algorithm is now exponential
in the number of control states, the size of the stack alphabet and the size of
the formula.  Let $m$ be the nesting depth of the fixed points of the formula
and $n$ be the number of states in $\ma_\muenv$.  We introduce at most $k =
\order{|\controlstates|\cdot|\chi|\cdot m\cdot |\alphabet|}$ states to the
automaton.  Hence, there are at most $\order{n+k}$ states in the automaton
during any stage of the algorithm.  The fixed point computations iterate up to
an $\order{2^{\order{n + k}}}$ number of times.  Each iteration has a recursive
call, which takes up to $\order{2^{\order{n + k}}}$ time.  Hence the algorithm
is $\order{2^{\order{n + k}}}$ overall.

\section{Correctness}
\label{correctness}

We extend the proofs of correctness.  We refer the reader to our previous work
for the full details~\cite{HO10b}.

\begin{samepage}
    \begin{definition}[Correctness Conditions]
        The correctness conditions are as follows.  Let $\ma$ be the input
        automaton, $\varphi$ be the input formula\footnote{For cases such as
        $\muand{\ma}{\varphi_1}{\varphi_2}{c}{\pds}$ we take, as appropriate
        $\varphi = \varphi_1 \land \varphi_2$.}, $c$ be the input level and $\ma'$
        be the result.
        \begin{enumerate}
        \item We only introduce level $c$ states.

        \item If $\ma$ is $\muenv$-sound, $\ma'$ is $\muenv^c_\varphi$-sound.

        \item If $\ma$ is $\muenv$-complete, $\ma'$ is $\muenv^c_\varphi$-complete.
        \end{enumerate}
    \end{definition}
\end{samepage}

The first condition is obvious.  The remaining conditions are shown by induction
and require the addition of proof cases for the new procedures.

\begin{lemma}[Valuation Soundness]
\label{issound}
    The algorithm is $\muenv$-sound.
\end{lemma}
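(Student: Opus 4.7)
The plan is to extend the mutual induction from the companion article by supplying proof cases for the two new procedures; every other case is unchanged. By the induction hypothesis applied to $\dispatch{\ma}{\varphi_1}{c}{\pds}$, the returned sub-automaton $(\ma_1,\mainit_1)$ is sound at the appropriate sub-environment, so every stack accepted by $\ma_1$ from $\mainit_1(p')$ encodes a configuration in $\mubrack{\varphi_1}^\pds_\muenv$. What remains is to show that any configuration $\config{p}{w}$ accepted by $\ma'$ from $(p,\backbox\varphi_1,c)$ (respectively $(p,\backdiamond\varphi_1,c)$) belongs to $\mubrack{\backbox\varphi_1}^\pds_\muenv$ (respectively $\mubrack{\backdiamond\varphi_1}^\pds_\muenv$).

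For $\mubackbox{\ma}{\varphi_1}{c}{\pds}$, I would fix an accepting run of $\ma'$ on $w$ starting at $(p,\backbox\varphi_1,c)$ and, picking an arbitrary predecessor $c'$ of $\config{p}{w}$ under $\pdstran$, case split on whether the command producing $c'$ is a pop, a rewrite, or a push. In the pop and rewrite sub-cases, the first transition of the run (which consumes the stack top $a$) moves into a state set containing $Q_{pop}\cup Q_{rew}$, and the $\madelta_1$-subruns embedded in the definition of this set supply accepting runs of $\ma_1$ on the stack of $c'$ from $\mainit_1(p')$ for the appropriate $p'$; the induction hypothesis then places $c'$ in the denotation of $\varphi_1$. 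The push sub-case uses the intermediate state $(p,\backbox\varphi_1,c,a)$: after the run consumes both of the top characters $a$ and $b$, the next target set is assembled from the $Q^{push}_j$, and the same extraction argument produces an accepting $\ma_1$-run on the stack of $c'$. The remaining transition schemas, which send the run into $\set{q^\ast}$ or $\set{q^\eword_f}$ when $\pre{p}{a}{b}$ or $\prepush{p}{a}{b}$ is empty, correspond exactly to the vacuous cases of the universal quantification in the semantics of $\backbox$.

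The $\mubackdiamond$ case follows the same template but is easier, because each transition out of $(p,\backdiamond\varphi_1,c)$ or its intermediate state is indexed by a single chosen predecessor: the accepting run itself selects the witnessing $c'$ together with an embedded $\ma_1$-run on its stack, and one application of the induction hypothesis discharges the existential. I expect the main technical hurdle to be the push sub-case of the $\backbox$ argument, because a push predecessor depends on the two topmost stack characters. One has to coordinate the first transition (which discharges pop and rewrite predecessors while forwarding push obligations through the intermediate state) with the second transition (which reads $b$ and finally discharges the push obligations), and verify that the $q^\ast$ and $q^\eword_f$ transitions fire exactly when no predecessor of the given shape exists, so that the universal quantifier is neither over- nor under-constrained.
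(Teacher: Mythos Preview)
Your proposal is correct and follows essentially the same approach as the paper: both extend the structural induction by adding cases for the two new procedures, invoke the induction hypothesis on the recursive call to obtain soundness of $\ma_1$, and then verify each newly added transition by a case split on pop/rewrite/push predecessors, with the push case routed through the intermediate states $(p,\backbox\varphi_1,c,a)$ and the $q^\ast$/$q^\eword_f$ transitions discharging the vacuous universal cases. The only organisational difference is that the paper first establishes soundness of the intermediate states in isolation (observing that no $(p',\backbox\varphi_1,c)$ is reachable from them) and then uses that result when treating the main states, whereas you fold this into a single run-based argument that consumes two characters in the push sub-case; the content is the same.
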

\begin{proof}
    \proofcase{$\mubackbox{\ma}{\varphi_1}{c}{\pds}$}{%
        We assume that $\ma$ is valuation sound with respect to some valuation
        $\muenv$.  By induction the result $\ma_1$ of the recursive call is
        valuation sound with respect to $\muenv^c_{\varphi_1}$.  We show that
        $\ma'$ is valuation sound with respect to
        $\muenv^c_{\backbox\varphi_1}$.

        We observe that no $(p', \backbox \varphi_1, c)$ are reachable from a
        state $(p, \backbox \varphi, c, a)$, hence we show soundness for the
        latter states first.

        The first case is for some $b$ with $\prepush{p}{a}{b} = \emptyset$.  In
        this case, the valuation of $(p, \backbox \varphi, c, a)$ contains all
        words of the form $bw$.  Hence soundness is immediately satisfied.

        Otherwise, $\prepush{p}{a}{b} = \set{(p_1, a_1),\ldots,(p_n, a_n)}$ such
        that for all $1 \leq j \leq n$, $\config{p_j}{a_iw} \pdstran
        \config{p}{abw}$.  Take a new transition $((p, \backbox \varphi_1, c,
        a), b, Q)$ derived from the runs
        $\matransA{\mainit_1(p_j)}{a_j}{Q_j}{\ma_1}$ for all $1 \leq j \leq n$,
        with $Q = Q_1 \cup Q_n$.  Suppose for some $w$, $w \in
        \muenv^c_{\backbox\varphi_1}(q)$ for all $q \in Q$.  By valuation
        soundness of $\ma_1$ we know $a_jw \in
        \muenv^c_{\backbox\varphi_1}(\mainit_1(p_j))$ and hence, since all
        transitions to $\config{p}{abw}$ are from configurations satisfying
        $\varphi_1$, we have $bw \in \muenv^c_{\backbox\varphi_1}(p, \backbox
        \varphi_1, c, a)$ as required.

        The remaining states are of the form $(p, \backbox \varphi_1, c)$.  We
        first deal with the case when for all $b$ we have $\pre{p}{a}{b} =
        \emptyset$.  In this case, the valuation of $\backbox \varphi_1$
        contains all words of the form $aw$ for some $w$.  Hence, all added
        transitions are trivially sound.

        Otherwise, take a new transition $((p, \backbox \varphi_1, c), a, Q)$
        derived from some $b$, the value of $\prepop{p} = \set{(p_1,
        a_1),\ldots,(p_n, a_n)}$ and for all $1 \leq j \leq n$, the runs
        $\matransA{\mainit_1(p_j)}{w_j}{Q'_j}{\ma_1}\matransA{}{b}{Q^{pop}_j}{\ma_1}$,
        with $Q_{pop} = Q^{pop}_1 \cup Q^{pop}_n$, and the value of $\prerew{p}
        = \set{(p'_1, a'_1),\ldots,(p'_{n'}, a'_{n'})}$ and for all $1 \leq j
        \leq n'$, the runs $\matransA{\mainit_1(p'_j)}{a'_j}{Q^{rew}_j}{\ma_1}$,
        with $Q_{rew} = Q^{rew}_1 \cup Q^{rew}_n$.  Finally, $Q = \set{(p,
        \backbox \varphi_1, c, a, b)} \cup Q_{pop} \cup Q_{rew}$. 
        
        Suppose for some $w$, $w \in \muenv^c_{\backbox \varphi_1}(q)$ for all
        $q \in Q_{pop}$.  By valuation soundness of $\ma_1$ we know $a_jaw \in
        \muenv^c_{\backbox\varphi_1}(\mainit_1(p_j))$ and hence all pop
        transitions leading to $\config{p}{aw}$ are from configurations
        satisfying $\varphi_1$. 
        
        Now suppose for some $aw$, $aw \in \muenv^c_{\backbox \varphi_1}(q)$ for
        all $q \in Q_{rew}$.  By valuation soundness of $\ma_1$ we know $a_jw
        \in \muenv^c_{\backbox\varphi_1}(\mainit_1(p_j))$ and hence all rewrite
        transitions leading to $\config{p}{aw}$ are from configurations
        satisfying $\varphi_1$.

        Finally, consider some $bw$ in the valuation of $(p, \backbox \varphi_1,
        c, a)$.  From the soundness of this state, shown above, we have that all
        push transitions leading to $\config{p}{abw}$ are from configurations
        satisfying $\varphi_1$.

        Putting the three cases together, we have for all $abw \in
        \muenv^c_{\backbox\varphi_1}(p, \backbox \varphi_1, c)$ as required. 
        
        The above cases do not cover the case $\sbot \in \muenv^c_{\backbox
        \varphi_1}(p, \backbox \varphi_1, c)$.  However, since no push
        transition can reach this stack, we just require the first two cases and
        that $(p, \backbox \varphi_1, c, \sbot) = q^\eword_f$. }

    \proofcase{$\mubackdiamond{\ma}{\varphi_1}{c}{\pds}$}{%
        We assume that $\ma$ is valuation sound with respect to some valuation
        $\muenv$.  By induction the result $\ma_1$ of the recursive call is
        valuation sound with respect to $\muenv^c_{\varphi_1}$.  We show that
        $\ma'$ is valuation sound with respect to $\muenv^c_{\backdiamond
        \varphi_1}$.

        We begin with the states $(p, \backdiamond, c, a)$.  Take a transition
        $((p, \backdiamond, c, a), b, Q)$.  Then there is some $(p', a') \in
        \prepush{p}{a}{b}$ such that $\matrans{\mainit_1(p')}{a'}{Q}{\ma_1}$.
        From the soundness of $\ma_1$ we know for all $w$ with $w \in
        \muenv^c_{\backdiamond \varphi_1}(q)$ for all $q \in Q$ we have $a'w \in
        \muenv^c{\backdiamond \varphi_1}(\mainit_1(p'))$.  Since
        $\config{p'}{a'w} \pdstran \config{p}{abw}$ we have $\config{p}{abw}$
        satisfies $\varphi_1$ and hence $bw \in \muenv^c_{\backdiamond
        \varphi_1}(p, \backdiamond, c, a)$ and the transition is sound.

        For the remaining states, take a new transition $((p, \backdiamond
        \varphi_1, c), a, Q)$.  There are three cases.
        
        If the transition was derived from some $(p', a') \in \prepop{p}$ and
        the run $\matransA{\mainit_1(p')}{a'a}{Q}{\ma_1}$, then suppose for some
        $w$, $w \in \muenv^c_{\backdiamond \varphi_1}(q)$ for all $q \in Q$.  By
        valuation soundness of $\ma_1$ we know $a'aw \in
        \muenv^c_{\backdiamond\varphi_1}(\mainit_1(p'))$ and hence, since there
        is a transition $\config{p'}{a'aw}$, a configuration satisfying
        $\varphi_1$, to $\config{p}{aw}$ we obtain $aw \in
        \muenv^c_{\backdiamond\varphi_1}(p, \backdiamond \varphi_1, c)$ as
        required.  
        
        If the transition was derived from some $(p', a') \in \prerew{p}{a}$ and
        the run $\matransA{\mainit_1(p')}{a'}{Q}{\ma_1}$, then suppose for some
        $w$, $w \in \muenv^c_{\backdiamond \varphi_1}(q)$ for all $q \in Q$.  By
        valuation soundness of $\ma_1$ we know $a'w \in
        \muenv^c_{\backdiamond\varphi_1}(\mainit_1(p'))$ and hence, since there
        is a transition $\config{p'}{a'w}$, a configuration satisfying
        $\varphi_1$, to $\config{p}{aw}$ we obtain $aw \in
        \muenv^c_{\backdiamond\varphi_1}(p, \backdiamond \varphi_1, c)$ as
        required. 
       
        Finally, if $Q = \set{(p, \backdiamond, c, a)}$ then soundness is
        immediate from the definition of $\muenv^c_{\backdiamond \varphi_1}$.  }
\end{proof}

\begin{lemma}[Valuation Completeness]
\label{iscomplete}
    The algorithm is $\muenv$-complete.
\end{lemma}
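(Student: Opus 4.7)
The plan is to mirror the structure of the soundness proof of Lemma~\ref{issound}, performing the same mutual induction on formula structure and adding the two new cases. In each case we assume $\ma$ is $\muenv$-complete, so by induction the result $\ma_1$ of the recursive call is $\muenv^c_{\varphi_1}$-complete; we then show that $\ma'$ is $\muenv^c_{\backbox \varphi_1}$-complete (respectively $\muenv^c_{\backdiamond \varphi_1}$-complete). The general recipe is: given a word $w$ in the valuation of some newly introduced state, unpack what this says about the configuration $\config{p}{w}$ and its predecessors under $\pdstran$, use completeness of $\ma_1$ to obtain accepting runs for the relevant pre-image stacks, and stitch those runs together into a single transition of $\ma'$.

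For $\mubackbox{\ma}{\varphi_1}{c}{\pds}$, I first handle the intermediate states $(p, \backbox \varphi_1, c, a)$, as their valuations are independent of the main states. Given $bw$ in the valuation of $(p, \backbox \varphi_1, c, a)$, if $\prepush{p}{a}{b} = \emptyset$ (or its $\sbot$-analogue) the trivial $q^\ast$ / $q^\eword_f$ transition already witnesses the run. Otherwise, for each $(p_j, a_j) \in \prepush{p}{a}{b}$, the push predecessor $\config{p_j}{a_j w}$ of $\config{p}{abw}$ satisfies $\varphi_1$, so completeness of $\ma_1$ yields an accepting run of $a_j w$ from $\mainit_1(p_j)$; extracting the set $Q^{push}_j$ reached after reading $a_j$, the union $Q = \bigcup_j Q^{push}_j$ is exactly the target of a valid $b$-transition of $\ma'$, and each $Q^{push}_j$ accepts the residual $w$.

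Then for the main states, given $aw$ in the valuation of $(p, \backbox \varphi_1, c)$, every pop, rewrite, and push predecessor of $\config{p}{aw}$ satisfies $\varphi_1$. For each $(p_j, a_j) \in \prepop{p}$, completeness of $\ma_1$ applied to $a_j a w$ (accepted from $\mainit_1(p_j)$) supplies the intermediate state set $Q^{pop}_j$ reached after reading $a_j a$; analogously, for each $(p'_j, a'_j) \in \prerew{p}{a}$, reading $a'_j$ yields $Q^{rew}_j$. Combined with the intermediate state $(p, \backbox \varphi_1, c, a)$ (which, by the paragraph above, accepts the tail of $w$ starting with the second symbol), these give a single $a$-transition $((p, \backbox \varphi_1, c), a, Q)$ of $\ma'$ from which the accepting run continues. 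The boundary cases $w = \sbot$ and empty pre-images are absorbed by the trivial transitions. For $\mubackdiamond$ the argument is analogous but easier: $\backdiamond \varphi_1$ only requires a single witnessing predecessor, so we branch on whether it arises as a pop, a rewrite, or a push, invoke completeness of $\ma_1$ on the pre-image stack, and pick the matching transition from $\ma'$'s three families; the push case routes through the intermediate state via the unconditional $\set{(p, \backdiamond \varphi_1, c, a)}$-transition on the first symbol.

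The main obstacle will be the $\mubackbox$ case, because the single $a$-transition out of $(p, \backbox \varphi_1, c)$ has to witness \emph{all} pop and rewrite predecessors simultaneously, while also delegating the push predecessors to the intermediate state. This requires consistently amalgamating independent completeness-derived runs of $\ma_1$ into one set $Q$ exactly matching the shape specified by the algorithm, and tracking carefully how the residual word (what remains after consuming $a$, and after the further $b$ consumed by the intermediate state) aligns with the suffix accepted by each of the extracted $Q^{pop}_j$, $Q^{rew}_j$, and $Q^{push}_j$. Once this bookkeeping is pinned down, both new cases follow the same pattern as the forward modal cases in the earlier work.
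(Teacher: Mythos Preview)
Your proposal is correct and follows essentially the same route as the paper: induction on the formula, treat the intermediate states $(p,\backbox\varphi_1,c,a)$ first via the push predecessors (with the $q^\ast$/$q^\eword_f$ shortcuts for empty $\prepush{p}{a}{b}$), then assemble the single $a$-transition out of $(p,\backbox\varphi_1,c)$ from the pop, rewrite, and intermediate components exactly as the algorithm prescribes; for $\backdiamond$ you do the three-way case split on the witnessing predecessor just as the paper does. The only cosmetic difference is that the paper names the accepted word $abw$ from the outset rather than $aw$ with $w=bw'$, which makes the residual-suffix bookkeeping you flag as the ``main obstacle'' slightly more transparent.
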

\begin{proof}
    \proofcase{$\mubackbox{\ma}{\varphi_1}{c}{\pds}$}{%
        We are given that $\ma$ is valuation complete with respect to some
        valuation $\muenv$, and by induction we have completeness of the result
        $\ma_1$ of the recursive call with respect to $\muenv^c_{\varphi_1}$.
        We show $\ma'$ is complete with respect to
        $\muenv^c_{\backbox\varphi_1}$.

        As in the soundness proof, we begin with the states $(p, \backbox
        \varphi_1, c, a)$.  In the case $\prepush{p}{a}{b} = \emptyset$ for some
        $b$, we either have $b = \sbot$ and the transition from $(p, \Box
        \varphi_1, c, a)$ to $\set{q^\eword_f}$ witnesses completeness, or we
        have $a \neq \sbot$ and the transition to $\set{q^\ast}$ witnesses
        completeness.

        Otherwise $\prepush{p}{a}{b} = \set{(p_1, a_1),\ldots,(p_n, a_n)}$.
        Take some $bw$ such that $abw \in \muenv^c_{\backbox \varphi_1}(p,
        \backbox \varphi_1, c, a)$.  Then we have $a_jw \in \muenv^c_{\backbox
        \varphi_1}(p_j, \varphi_1, c)$ for all $1 \leq j \leq n$.  From
        completeness of $\ma_1$ we have a transition
        $\matrans{\mainit_1(p_j)}{a_j}{Q_j}$ with $w \in \muenv^c_{\backbox
        \varphi_1}(q)$ for all $q \in Q_j$.  Hence, we have a complete
        $b$-transition from $(p, \backbox \varphi_1, c, a)$ as required.

        For the states of the form $(p, \backbox \varphi_1, c)$ we first deal
        with the case when for all $b$ we have $\pre{p}{a}{b} = \emptyset$.  In
        this case we immediately have transitions witnessing completeness.
        
        Otherwise, take some $abw \in \muenv^c_{\backbox \varphi_1}(p, \backbox
        \varphi_1, c)$.  Then, for all $(p', a') \in \prepop{p}$,  we have
        $a'abw \in \muenv^c_{\backbox \varphi_1}(\mainit_1(p'))$; and for all
        $(p', a') \in \prerew{p}{a}$ we have $a'bw \in \muenv^c_{\backbox
        \varphi_1}(\mainit_1(p'))$; and for all $(p', a') \in \prepush{p}{a}{b}$
        we have $a'w \in \muenv^c_{\backbox \varphi_1}{\mainit_1(p')}$.  From
        completeness of $\ma_1$ we have a complete run
        $\matransA{\mainit_1(p')}{a'}{Q'}{\ma_1}\matransA{}{a}{Q}{\ma_1}$ for
        each $(p', a') \in \prepop{p}$ and a complete run
        $\matransA{\mainit_1(p')}{a'}{Q}{\ma_1}$ for each $(p', a') \in
        \prerew{p}{a}$.  Since we know $bw \in \muenv^c_{\backbox \varphi_1}(p,
        \backbox \varphi_1, c, a)$ there must be some complete transition from
        $(p, \backbox \varphi_1, c)$ as required.

        The only case not covered by the above is the case $\sbot \in
        \muenv^c_{\backbox \varphi_1}(p, \backbox, \varphi_1, c)$.  In this case
        there are no push transitions reaching this configuration.  That is
        $\prepush{p}{\sbot}{b} = \emptyset$ for all $b$.  Note also that we
        equated all $(p, \backbox \varphi_1, c, \sbot)$ with $q^\eword_f$.
        Hence, from the pop and rewrite cases above, and that $(p, \backbox
        \varphi_1, c, \sbot) = q^\eword_f$ we have completeness as required.  }

     \proofcase{$\mubackdiamond{\ma}{\varphi_1}{c}{\pds}$}{%
         We are given that $\ma$ is valuation complete with respect to some
         valuation $\muenv$, and by induction we have completeness of the result
         $\ma_1$ of the recursive call with respect to $\muenv^c_{\varphi_1}$.
         We show $\ma'$ is complete with respect to
         $\muenv^c_{\backdiamond\varphi_1}$.  There are three cases.

         Assume some $aw$ such that $aw \in \muenv^c_{\backdiamond\varphi_1}(p,
         \backdiamond \varphi_1, c)$ by virtue of some $(p', a') \in \prepop{p}$
         such that we have $\config{p'}{a'aw} \in
         \muenv^c_{\backdiamond\varphi_1}(\mainit_1(p'))$.  By completeness of
         $\ma_1$ we have a run $\matransA{\mainit_1(p')}{a'a}{Q}{\ma_1}$ such
         hat for all $q \in Q$, $w \in \muenv^c_{\backdiamond\varphi_1}(q)$.
         Hence, the transition $((p, \backdiamond\varphi_1, c), a, Q)$ witnesses
         completeness.  

         Otherwise, take some $aw$ such that $aw \in
         \muenv^c_{\backdiamond\varphi_1}(p, \backdiamond\varphi_1, c)$ from
         some $(p', a') \in \prerew{p}{a}$ such that we have $\config{p'}{a'w}
         \in \muenv^c_{\backdiamond\varphi_1}(\mainit_1(p'))$.  By completeness
         of $\ma_1$ we have a run $\matransA{\mainit_1(p')}{a'}{Q}{\ma_1}$ such
         that for all $q \in Q$, $w \in \muenv^c_{\backdiamond\varphi_1}(q)$.
         Hence, the transition $((p, \backdiamond\varphi_1, c), a, Q)$ witnesses
         completeness.  

         Finally, take some $abw$ such that $abw \in
         \muenv^c_{\backdiamond\varphi_1}(p, \backdiamond\varphi_1, c)$ from
         some $(p', a') \in \prepush{p}{a}{b}$ such that we have
         $\config{p'}{a'w} \in \muenv^c_{\backdiamond\varphi_1}(\mainit_1(p'))$.
         By completeness of $\ma_1$ we have a run
         $\matransA{\mainit_1(p')}{a'}{Q}{\ma_1}$ such that for all $q \in Q$,
         $w \in \muenv^c_{\backdiamond\varphi_1}(q)$.  Hence, the transitions
         $((p, \backdiamond \varphi_1, c), a, \set{(p, \backdiamond, c, a)})$
         and $((p, \backdiamond\varphi_1, c, a), a, Q)$ witness completeness.  }
\end{proof}

\section{Conclusion and Future Work}

In previous work, we have introduced a saturation method for directly computing
the denotation of a \mucalc formula over the configuration graph of a pushdown
system.  Here, we have shown how to extend this work to allow backwards
modalities.

\bibliographystyle{plain}
\bibliography{references}

\end{document}